\theoremstyle{plain}
\newtheorem{theorem}{Theorem}
\newtheorem{proposition}{Proposition}
\newtheorem{corollary}{Corollary}
\newtheorem{lemma}{Lemma}
\theoremstyle{definition}
\newtheorem{example}{Example}
\newcommand{\hil}{\mathcal{H}}
\newcommand{\linear}{\mathcal{L}}
\newcommand{\id}{\mathcal{I}}
\DeclareMathOperator{\tr}{tr}
\begin{document}

\title{AN UNCERTAINTY RELATION FOR MEASUREMENTS OF RANDOM UNITARY CHANNELS ACTING ON A QUBIT}
\author{Taihei Kimoto\footnote{kimoto.taihei.22e@st.kyoto-u.ac.jp} \ and Takayuki Miyadera\footnote{miyadera@nucleng.kyoto-u.ac.jp}}
\affil{Department of Nuclear Engineering, Kyoto University, Kyoto daigaku-katsura\\
Nishikyo-ku, Kyoto 615-8540,
Japan}
\date{}
\maketitle

\begin{abstract}
By preparing an input state and measuring an observable for the output state, we can measure a quantum channel.
Following the formulation given by Xiao \emph{et al}., we study an uncertainty relation for ancilla-free measurements of random unitary channels acting on a qubit.
We obtain an explicit formula and give a necessary and sufficient condition for this formula to be nontrivial.
\end{abstract}

\section{Introduction}
In contrast to classical theory, quantum theory is inherently stochastic; i.e., it predicts probability distributions instead of measurement results in experiments.
Furthermore, there exist nontrivial inequalities known as preparation uncertainty relations (PURs) that impose a restriction on the probability distributions.
Several PURs have been derived, and their differences mainly come from the measures of uncertainty used in their formulation.
For instance, the Robertson\cite{PhysRev.34.163}, Maassen-Uffink\cite{PhysRevLett.60.1103}, and Landau-Pollak\cite{PhysRevA.71.052325,PhysRevA.76.062108} inequalities employ the standard deviation, Shannon entropy and maximum value of the probability distribution as the measures of uncertainty, respectively.
Among them, PURs that use informational entropy as the measure of uncertainty, such as Maassen-Uffink's inequality, are called entropic uncertainty relations.
Because of its affinity for quantum information processing, the entropic uncertainty relations have been well studied over the past decades\cite{RevModPhys.89.015002}.
Additionally, in Refs. \citen{doi:10.1063/1.3614503,doi:10.1063/5.0017854,Takakura_2021} it is shown that the PURs are closely related to another uncertainty relation called uncertainty relation for joint measurement.

While many studies have been devoted to PURs, the preparation of quantum states is a special case of the quantum channel as a quantum state can be regarded as a quantum channel from a system with one-dimensional Hilbert space.
Therefore, it is natural to extend PURs so that they can treat quantum channels.
Recently, Xiao \emph{et al}.\cite{PhysRevResearch.3.023077} obtained uncertainty relations for all quantum channels in several forms using the process positive-operator-valued measure (PPOVM) formalism introduced by Ziman\cite{PhysRevA.77.062112}.
These formulas reduce to the conventional PURs when restricted to the preparation of quantum states.

In addition to the uncertainty relations for all quantum channels, uncertainty relations for a specific class of quantum channels are  meaningful because some quantum information processing employs specific quantum channels.
Another advantage of restricting the quantum channel in this way is that it provides a simple and specific evaluation formula.
Therefore, this paper aims to obtain such a formula, which is easy to evaluate for a specific class of quantum channels.
We choose random unitary channels\cite{heinosaari_ziman_2011} acting on a qubit as specific quantum channels that appear in some quantum information processing, such as the bit flip channel\cite{nielsen_chuang_2010} and the definition of a private quantum channel\cite{892142}.

The reminder of the paper is organized as follows. Section \ref{sec:preliminaries} presents the preliminaries. We obtain an explicit formula of the uncertainty relation for random unitary channels acting on a qubit in Section \ref{sec:f_and_r}. Finally, Section \ref{sec:c_and_d} presents the conclusion and discussions of the results.

\section{Preliminaries} \label{sec:preliminaries}
In quantum theory, a quantum state is described using a density operator, i.e., a linear operator $\rho$ satisfying $0\leq\rho$ and $\tr\rho=1$.
To obtain an outcome, we have to make a measurement.
A measurement maps a quantum state to a probability distribution.
The map is well described by an observable.
Each observable is represented by a collection of operators $\{E_{m}\}_{m\in\Omega}$ satisfying $0\leq E_{m}\leq I$ and $\sum_{m\in\Omega}E_{m}=I$.
Here $\Omega$ denotes the outcome set and $I$ denotes the identity operator.
Such a collection is called a positive-operator-valued measure (POVM) which gives $\tr\left[E_{m}\rho\right]$ as the probability of obtaining an outcome $m$ for a quantum state $\rho$.

To formulate an uncertainty relation for measurements of random unitary channels, we need to explain how to measure quantum channels.
As in the paper of Xiao \emph{et al}.,\cite{PhysRevResearch.3.023077} we employ the framework for channel measurements introduced by Ziman.\cite{PhysRevA.77.062112}
A quantum channel is described using a linear, completely positive, and trace preserving map $\Lambda:\linear(\hil_{in})\rightarrow\linear(\hil_{out})$, where $\hil_{in}$ and $\hil_{out}$ denote the input and output systems, respectively.
In this paper, all quantum channels are random unitary channels acting on a qubit, and hence, $\hil_{in}=\hil_{out}=\hil=\mathbb{C}^{2}$ and
\begin{equation}
\Lambda=\sum_{x}p_{x}\mathcal{U}_{x}, \label{eq:random_unitary}
\end{equation}
where $\{p_{x}\}_{x}$ is a probability distribution and $\{\mathcal{U}_{x}\}_{x}$ is a collection of unitary channels (i.e., $\mathcal{U}_{x}$ is represented as $\mathcal{U}_{x}(\rho)=U_{x}\rho U_{x}^{\dagger}$ with a unitary operator $U_{x}$). In the framework of channel measurements established by Ziman, a measurement of $\Lambda$ is performed as follows.
\begin{enumerate}
\item Prepare a state $\rho$ on a composite system $\hil_{anc}\otimes\hil$, where $\hil_{anc}$ is a reference system.
\item Apply $\Lambda$ to $\hil$.
\item Measure $\hil_{anc}\otimes\hil$ with a POVM $E=\{E_{m}\}_{m\in\Omega}$ on the composite system.
\end{enumerate}
It is obvious from these procedures that a channel measurement is characterized by $(\rho, E)$; thus, we identify a channel measurement with $(\rho, E)$. Ziman showed that the probability of obtaining an outcome $m$ by a channel measurement $(\rho, E)$ is given by
\begin{equation}
p_{m}(\Lambda)=\tr\left[\tilde{E}_{m}J_{\Lambda}\right], \label{eq:p_born}
\end{equation}
where $\tilde{E}_{m}$ is an operator acting on $\hil\otimes\hil$ determined by $\rho$ and $E_{m}$ and $J_{\Lambda}$ is defined as $J_{\Lambda}:=\left(\id_{in}\otimes\Lambda\right)(\ket{\psi'_{+}}\bra{\psi'_{+}})$ by an unnormalized maximally entangled state $\ket{\psi'_{+}}:=\sum^{d-1}_{i=0}\ket{i\otimes i}$.
Each operator $\tilde{E}_{m}$ and their collection $\tilde{E}=\{\tilde{E}_m\}_{m\in\Omega}$ are called a process-channel effect
and a process POVM (PPOVM) respectively.
In this paper, we consider only ancilla-free measurements, i.e., channel measurements specified with $(\xi\otimes\rho, \{I_{anc}\otimes E_{m}\}_{m\in\Omega})$. In this case, its process-channel effects become
\begin{equation}
\tilde{E}_{m}=\rho^{T}\otimes E_{m}, \label{eq:no_anc_ppovm}
\end{equation}
where $^T$ denotes transposition under the bases $\{\ket{i}\bra{j}\}^{d-1}_{i,j=0}$.\cite{PhysRevA.77.062112}

\section{Formulation and Results} \label{sec:f_and_r}
In a version of PURs called a PUR for probabilistic measurements of quantum states, one prepares a quantum state $\rho$ and measure with a POVM $A:=\{A_{m}\}_{m\in\Omega_{A}}$ with probability $r$ and measure with another POVM $B:=\{B_{n}\}_{n\in\Omega_{B}}$ with probability $1-r$.
We denote using $\mathbf{p}(\rho)=\{p_{m}(\rho)\}_{m\in\Omega_{A}}$ and $\mathbf{q}(\rho)=\{q_{n}(\rho)\}_{n\in\Omega_{B}}$ the probability distributions of the outcomes for $A$ and $B$, respectively.
We can assume $\Omega_{A}\cap\Omega_{B}=\emptyset$ in general.
In other words, we know which measurement is performed, and each measurement outcome is accompanied by an index to indicate the obseravable.
In this scenario, the PUR claims that there exists a nontrivial trade-off between $\mathbf{p}(\rho)$ and $\mathbf{q}(\rho)$.
For instance, let $A$ and $B$ be nondegenerate sharp observables described using $A_{m}=\ket{a_{m}}\bra{a_{m}}$ and $B_{n}=\ket{b_{n}}\bra{b_{n}}$, respectively.
Suppose that we measure $A$ and $B$ randomly.
The probability of obtaining an outcome of either $m$ or $n$ is computed as $(1/2)p_{m}(\rho)+(1/2)q_{n}(\rho)$, and it is bounded as
\begin{equation}
\frac{1}{2}p_{m}(\rho)+\frac{1}{2}q_{n}(\rho)
\leq\frac{1}{2}\left(1+ \left|\braket{a_{m}|b_{n}}\right|\right). \notag
\end{equation}
This inequality is equivalent to the so-called Landau-Pollak uncertainty relation.
If $\left|\braket{a_{m}|b_{n}}\right|\neq 1$, one can conclude that no state can attain $p_{m}(\rho)=q_{n}(\rho)=1$.
Generally, we consider a subset $\mathcal{M}\subset\Omega_{A}$ and $\mathcal{N}\subset\Omega_{B}$ and examine an upper bound of the value
\begin{equation}
r\sum_{m\in\mathcal{M}}p_{m}(\rho)+(1-r)\sum_{n\in\mathcal{N}}q_{n}(\rho), \label{eq:pur_prob}
\end{equation}
i.e., the probability of obtaining an outcome included in either $\mathcal{M}$ or $\mathcal{N}$. If the maximum of (\ref{eq:pur_prob}) satisfies 
\begin{align}
&\max_{\rho}\left(r\sum_{m\in\mathcal{M}}p_{m}(\rho)+(1-r)\sum_{n\in\mathcal{N}}q_{n}(\rho)\right) \notag \\
&<r\max_{\rho}\sum_{m\in\mathcal{M}}p_{m}(\rho)+(1-r)\max_{\rho}\sum_{n\in\mathcal{N}}q_{n}(\rho), \notag
\end{align}
we find a nontrivial trade-off. For more details, please refer to Ref. \citen{PhysRevA.89.052115}.

The scenario of our uncertainty relation for quantum channels is similar to this PUR.
For a quantum channel $\Lambda$, let us consider an experiment where an ancilla-free channel measurement $\mathcal{T}_{1}=(\xi\otimes\rho_{in},E=\{I_{anc}\otimes E_{m}\}_{m\in\Omega_{E}})$ is performed with probability $r$, and another ancilla-free channel measurement $\mathcal{T}_{2}=(\xi'\otimes\tau_{in},F=\{I_{anc}\otimes F_{n}\}_{n\in\Omega_{F}})$ is performed with probability $1-r$.
Let $p_{m}(\Lambda)$ and $q_{n}(\Lambda)$ be the probabilities of the outcomes for $\mathcal{T}_{1}$ and $\mathcal{T}_{2}$, respectively.
Thus, we replaced the state $\rho$ with the quantum channel $\Lambda$ and the POVMs with the ancilla-free channel measurements.

As in the scenario for PURs, we are interested in a quantity
\begin{equation}
r\sum_{m\in\mathcal{M}}p_{m}(\Lambda)+(1-r)\sum_{n\in\mathcal{N}}q_{n}(\Lambda), \label{eq:ur_func}
\end{equation}
where $\mathcal{M}$ and $\mathcal{N}$ are arbitrary subsets of $\Omega_{E}$ and $\Omega_{F}$, respectively.
Its maximum value is defined as
\begin{equation}
C(\mathcal{M},\mathcal{N}):=\max_{\Lambda}\left(r\sum_{m\in\mathcal{M}}p_{m}(\Lambda)+(1-r)\sum_{n\in\mathcal{N}}q_{n}(\Lambda)\right), \label{eq:bound_def}
\end{equation}
where the maximization is taken over all random unitary channels. If this value is strictly smaller than a trivial value
\begin{equation}
T(\mathcal{M},\mathcal{N}):=r\max_{\Lambda}\sum_{m\in\mathcal{M}}p_{m}(\Lambda)+(1-r)\max_{\Lambda}\sum_{n\in\mathcal{N}}q_{n}(\Lambda),
\end{equation}
we conclude that a nontrivial trade-off exists between two probability distributions.
The following theorem gives a concrete formula of (\ref{eq:bound_def}).
\begin{theorem} \label{theorem:main}
A random unitary channel $\Lambda$ is measured with $\mathcal{T}_{1}=(\xi\otimes\rho_{in},E=\{I_{anc}\otimes E_{m}\}_{m\in\Omega_{E}})$ with probability $r$ and $\mathcal{T}_{2}=(\xi'\otimes\tau_{in},F=\{I_{anc}\otimes F_{n}\}_{n\in\Omega_{F}})$ with probability $1-r$. Let $p_{m}(\Lambda)$ and $q_{n}(\Lambda)$ be the probabilities of $\mathcal{T}_{1}$ and $\mathcal{T}_{2}$, respectively. Then, the following inequality holds
\begin{equation}
r\sum_{m\in\mathcal{M}}p_{m}(\Lambda)+(1-r)\sum_{n\in\mathcal{N}}q_{n}(\Lambda)
\leq C(\mathcal{M},\mathcal{N}), \label{eq:main_result}
\end{equation}
where $\mathcal{M}$ and $\mathcal{N}$ are arbitrary subsets of $\Omega_{E}$ and $\Omega_{F}$, respectively.
Here, the right-hand side is given by
\begin{align}
&C(\mathcal{M},\mathcal{N}) \notag \\
&=\frac{Z(\mathcal{M},\mathcal{N})}{2}+\frac{1}{2}
\sqrt{
\begin{aligned}
&\left((r\Delta\rho_{in}\Delta E(\mathcal{M})-(1-r)\Delta\tau_{in}\Delta F(\mathcal{N})\right)^{2} \\
&+4r(1-r)\Delta\rho_{in}\Delta E(\mathcal{M})\Delta\tau_{in}\Delta F(\mathcal{N})\Biggl(\left|\braket{\psi|\phi}\right|\left|\braket{\psi'|\phi'}\right| \\
&+\sqrt{\left(1-\left|\braket{\psi|\phi}\right|^{2}\right)\left(1-\left|\braket{\psi'|\phi'}\right|^{2}\right)}\Biggr)^{2}
\end{aligned}
}, \label{eq:ur_bound}
\end{align}
where $E(\mathcal{M}):=\sum_{m\in\mathcal{M}}E_{m}$ and $F(\mathcal{N}):=\sum_{n\in\mathcal{N}}F_{n}$ and $\Delta(\cdot)$ denotes the difference between the maximum and minimum eigenvalues of its argument and $\ket{\psi},\ket{\phi},\ket{\psi'},\ket{\phi'}$ are eigenvectors of the maximum eigenvalue of $\rho_{in},\tau_{in},E(\mathcal{M}),F(\mathcal{N})$, respectively.
\end{theorem}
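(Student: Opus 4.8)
The plan is to reduce the channel-measurement probabilities to ordinary Born probabilities, use convexity to collapse the maximization over random unitary channels to a single unitary, and then solve the resulting problem as a geometric optimization on the Bloch sphere. First I would evaluate (\ref{eq:p_born}) for the ancilla-free effects (\ref{eq:no_anc_ppovm}). Writing $\ket{\psi'_+}\bra{\psi'_+}=\sum_{i,j}\ket{i}\bra{j}\otimes\ket{i}\bra{j}$ and inserting $\tilde E_m=\rho_{in}^{T}\otimes E_m$ together with $J_\Lambda=(\id_{in}\otimes\Lambda)(\ket{\psi'_+}\bra{\psi'_+})$ into $p_m(\Lambda)=\tr[\tilde E_m J_\Lambda]$, the factor $\langle j|\rho_{in}^{T}|i\rangle=\langle i|\rho_{in}|j\rangle$ lets the double sum collapse to $p_m(\Lambda)=\tr[E_m\Lambda(\rho_{in})]$, and likewise $q_n(\Lambda)=\tr[F_n\Lambda(\tau_{in})]$. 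Summing over $\mathcal M$ and $\mathcal N$ then turns the left-hand side of (\ref{eq:main_result}) into $r\,\tr[E(\mathcal M)\Lambda(\rho_{in})]+(1-r)\,\tr[F(\mathcal N)\Lambda(\tau_{in})]$.

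This functional is affine in $\Lambda$. Since the random unitary channels form the convex hull of the compact set of unitary channels, the maximum of an affine functional over this hull is attained at an extreme point, i.e. at some $\Lambda(\cdot)=U(\cdot)U^{\dagger}$; hence $C(\mathcal M,\mathcal N)=\max_U\big(r\,\tr[E(\mathcal M)U\rho_{in}U^{\dagger}]+(1-r)\,\tr[F(\mathcal N)U\tau_{in}U^{\dagger}]\big)$, a maximization over one qubit unitary. Using the spectral decompositions $\rho_{in}=\rho_{-}I+\Delta\rho_{in}\ket{\psi}\bra{\psi}$ and $E(\mathcal M)=E_{-}I+\Delta E(\mathcal M)\ket{\psi'}\bra{\psi'}$ (and similarly for $\tau_{in},F(\mathcal N)$), expanding each trace leaves exactly one $U$-dependent term per summand, namely $|\langle\psi'|U|\psi\rangle|^{2}$ and $|\langle\phi'|U|\phi\rangle|^{2}$, plus a $U$-independent remainder; using $\tr\rho_{in}=\tr\tau_{in}=1$ this remainder works out to $Z(\mathcal M,\mathcal N)/2$ with $Z(\mathcal M,\mathcal N)=r\,\tr[E(\mathcal M)]+(1-r)\,\tr[F(\mathcal N)]$. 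What remains is to maximize $a\,|\langle\psi'|U|\psi\rangle|^{2}+b\,|\langle\phi'|U|\phi\rangle|^{2}$, where $a:=r\,\Delta\rho_{in}\,\Delta E(\mathcal M)\ge 0$ and $b:=(1-r)\,\Delta\tau_{in}\,\Delta F(\mathcal N)\ge 0$.

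I would then pass to Bloch vectors, so that $|\langle\psi'|U|\psi\rangle|^{2}=\tfrac{1}{2}(1+\hat n_{\psi'}\cdot\hat u)$ and $|\langle\phi'|U|\phi\rangle|^{2}=\tfrac{1}{2}(1+\hat n_{\phi'}\cdot\hat v)$, where $\hat u,\hat v$ are the Bloch vectors of $U\ket{\psi},U\ket{\phi}$. Under the double cover $U\mapsto R\in SO(3)$ the pair $(\hat u,\hat v)$ ranges over all pairs of unit vectors whose mutual angle $\Theta$ is pinned by $|\langle\psi|\phi\rangle|=\cos(\Theta/2)$, while $\hat n_{\psi'},\hat n_{\phi'}$ are fixed at angle $\Theta'$ with $|\langle\psi'|\phi'\rangle|=\cos(\Theta'/2)$. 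The task becomes maximizing $a\,\hat n_{\psi'}\cdot\hat u+b\,\hat n_{\phi'}\cdot\hat v$ subject to $\hat u\cdot\hat v=\cos\Theta$. Lagrange multipliers show the maximizers lie in the plane spanned by $\hat n_{\psi'},\hat n_{\phi'}$, reducing the problem to a single angular variable whose optimum is $\sqrt{a^{2}+b^{2}+2ab\cos(\Theta-\Theta')}$; the competing in-plane orientation yields $\cos(\Theta+\Theta')\le\cos(\Theta-\Theta')$ and is discarded. Finally the half-angle identity $\cos(\Theta-\Theta')=2\big(\cos\tfrac{\Theta}{2}\cos\tfrac{\Theta'}{2}+\sin\tfrac{\Theta}{2}\sin\tfrac{\Theta'}{2}\big)^{2}-1$ rewrites $a^{2}+b^{2}+2ab\cos(\Theta-\Theta')$ as $(a-b)^{2}+4ab\,\Xi^{2}$ with $\Xi=|\langle\psi|\phi\rangle|\,|\langle\psi'|\phi'\rangle|+\sqrt{(1-|\langle\psi|\phi\rangle|^{2})(1-|\langle\psi'|\phi'\rangle|^{2})}$, which is exactly the bracket appearing in (\ref{eq:ur_bound}).

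I expect the geometric optimization to be the main obstacle. One must justify that $U$ realizes \emph{every} Bloch pair at the fixed angle $\Theta$ (this is the $SU(2)$--$SO(3)$ correspondence, since only $|\langle\psi|\phi\rangle|$ is $U$-invariant and the phase is gauged away on the sphere), verify via the stationarity equations that the optimizers are coplanar with $\hat n_{\psi'},\hat n_{\phi'}$, and --- most delicately --- select the correct branch so that $\cos(\Theta-\Theta')$ rather than $\cos(\Theta+\Theta')$ enters the final bound. By contrast, the probability reduction and the convexity step are routine once one recognizes that only single unitary channels can achieve the maximum.
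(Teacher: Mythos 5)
Your proposal is correct, and it reaches (\ref{eq:ur_bound}) by a genuinely different route from the paper's. The opening steps coincide: the reduction $p_{m}(\Lambda)=\tr[E_{m}\Lambda(\rho_{in})]$, $q_{n}(\Lambda)=\tr[F_{n}\Lambda(\tau_{in})]$ and the convexity argument collapsing the maximum over random unitary channels to a maximum over single unitary channels are exactly how the paper begins. From there the paper takes a different path: it rewrites $C(\mathcal M,\mathcal N)$ as $2Z(\mathcal M,\mathcal N)$ times the fully entangled fraction of the normalized two-qubit state $\rho(\mathcal M,\mathcal N)$, a mixture of two product states, and evaluates that FEF via the known analytic formula $f(\rho)=\frac14+4\|M(\rho)^{T}M(P_{+})\|_{KF}$ (Theorem~\ref{theorem:fef_qubit} and Lemma~\ref{lemma:fef_formula} in the appendix), i.e., a local-unitary normal form followed by an explicit $3\times3$ correlation-matrix computation. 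Your Bloch-sphere optimization replaces that machinery with elementary spherical geometry, and the key steps all check out: $SU(2)\to SO(3)$ is onto and rotations act transitively on pairs of unit vectors at fixed mutual angle, so $(\hat u,\hat v)$ really does sweep the entire constraint set; the coplanarity of the optimizers, the branch choice $\cos(\Theta-\Theta')\ge\cos(\Theta+\Theta')$ (valid because $\Theta,\Theta'\in[0,\pi]$), and the half-angle identity converting $a^{2}+b^{2}+2ab\cos(\Theta-\Theta')$ into $(a-b)^{2}+4ab\,\Xi^{2}$ (with your $\Xi$ the overlap combination in the theorem) are all correct. What you lose is the explicit identification of the bound with the FEF, which the authors exploit when discussing why the result does not readily extend beyond qubits; what you gain is a self-contained argument that makes the geometric origin of the interference term transparent. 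One bookkeeping point to fix when writing this up: the $U$-independent remainder accompanying $a|\braket{\psi'|U|\psi}|^{2}+b|\braket{\phi'|U|\phi}|^{2}$ is $Z(\mathcal M,\mathcal N)/2-(a+b)/2$, not $Z(\mathcal M,\mathcal N)/2$; the missing $(a+b)/2$ is restored when you substitute $|\braket{\psi'|U|\psi}|^{2}=\tfrac12(1+\hat n_{\psi'}\cdot\hat u)$, which is why your final expression nonetheless agrees with (\ref{eq:ur_bound}).
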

\begin{proof}
Obviously, an inequality
\begin{equation}
\max_{\mathcal{U}}\left(r\sum_{m\in\mathcal{M}}p_{m}(\mathcal{U})+(1-r)\sum_{n\in\mathcal{N}}q_{n}(\mathcal{U})\right)\leq C(\mathcal{M},\mathcal{N}) \label{eq:c_is_larger_than_u}
\end{equation}
holds, where the optimization of the left-hand side is taken over all unitary channels. By using (\ref{eq:random_unitary}) and (\ref{eq:p_born}), we get the reverse inequality
\begin{align}
&r\sum_{m\in\mathcal{M}}p_{m}(\Lambda)+(1-r)\sum_{n\in\mathcal{N}}q_{n}(\Lambda) \notag \\
&=\sum_{x}p_{x}\tr\left[\left(r\sum_{m\in\mathcal{M}}\tilde{E}_{m}+(1-r)\sum_{n\in\mathcal{N}}\tilde{F}_{n}\right)J_{\mathcal{U}_{x}}\right] \notag \\
&\leq\sum_{x}p_{x}\max_{\mathcal{U}}\tr\left[\left(r\sum_{m\in\mathcal{M}}\tilde{E}_{m}+(1-r)\sum_{n\in\mathcal{N}}\tilde{F}_{n}\right)J_{\mathcal{U}}\right] \notag \\
&=\max_{\mathcal{U}}\left(r\sum_{m\in\mathcal{M}}p_{m}(\mathcal{U})+(1-r)\sum_{n\in\mathcal{N}}q_{n}(\mathcal{U})\right), \notag
\end{align}
for arbitrary random unitary channels $\Lambda$. Therefore, (\ref{eq:c_is_larger_than_u}) becomes equality; i.e.,
\begin{align}
C(\mathcal{M},\mathcal{N})
&=\max_{\mathcal{U}}\left(r\sum_{m\in\mathcal{M}}p_{m}(\mathcal{U})+(1-r)\sum_{n\in\mathcal{N}}q_{n}(\mathcal{U})\right) \notag \\
&=\max_{\mathcal{U}}\tr\left[\left(r\sum_{m\in\mathcal{M}}\tilde{E}_{m}+(1-r)\sum_{n\in\mathcal{N}}\tilde{F}_{n}\right)J_{\mathcal{U}}\right]. \label{eq:ur_func_est}
\end{align}
We introduce a quantity called fully entangled fraction (FEF),\cite{PhysRevA.54.3824} which is closely related to the right-hand side of (\ref{eq:ur_func_est}).
For a state $\rho$ on the composite system $\hil\otimes\hil$, the FEF is defined as
\begin{equation}
f(\rho)=\max_{U}\bra{\psi_{+}}\left(I\otimes U^{\dagger}\right)\rho\left(I\otimes U\right)\ket{\psi_{+}}, \label{eq:fef_def}
\end{equation}
where the maximization in the right-hand side is taken over all unitary operators, and $\ket{\psi_{+}}$ is a maximally entangled state defined as $\ket{\psi_{+}}:=\ket{\psi'_{+}}/\sqrt{2}$. By substituting the definition of $J_{\mathcal{U}}$ into the right-hand side of (\ref{eq:ur_func_est}), we obtain
\begin{equation}
C(\mathcal{M},\mathcal{N})
=2\max_{U}\bra{\psi_{+}}\left(I\otimes U^{\dagger}\right)\left(r\sum_{m\in\mathcal{M}}\tilde{E}_{m}+(1-r)\sum_{n\in\mathcal{N}}\tilde{F}_{n}\right)\left(I\otimes U\right)\ket{\psi_{+}}. \label{eq:ur_func_est_before_fef}
\end{equation}
Furthermore, we can define a state related to the process-channel effect as
\begin{equation}
\rho(\mathcal{M},\mathcal{N}):=\frac{r\sum_{m\in\mathcal{M}}\tilde{E}_{m}+(1-r)\sum_{n\in\mathcal{N}}\tilde{F}_{n}}{Z(\mathcal{M},\mathcal{N})},
\end{equation}
where $Z(\mathcal{M},\mathcal{N}):=\tr\left[r\sum_{m\in\mathcal{M}}\tilde{E}_{m}+(1-r)\sum_{n\in\mathcal{N}}\tilde{F}_{n}\right]$. Hence, by using this state and (\ref{eq:fef_def}), we obtain
\begin{equation}
C(\mathcal{M},\mathcal{N})=2Z(\mathcal{M},\mathcal{N})f\left(\rho(\mathcal{M},\mathcal{N})\right). \label{eq:ur}
\end{equation}
Therefore, the problem of calculating $C(\mathcal{M},\mathcal{N})$ is translated into the problem of calculating the corresponding value of the FEF.

The process-channel effects of $\mathcal{T}_{1}$ and $\mathcal{T}_{2}$ are $\tilde{E}_{m}=\rho^{T}_{in}\otimes E_{m}$ and $\tilde{F}_{n}=\tau^{T}_{in}\otimes F_{n}$, respectively. Thus, we obtain $Z(\mathcal{M},\mathcal{N})=\tr\left[rE(\mathcal{M})+(1-r)F(\mathcal{N})\right]$ and the state
\begin{equation}
\rho(\mathcal{M},\mathcal{N})
=R\rho^{T}_{in}\otimes\frac{E(\mathcal{M})}{\tr E(\mathcal{M})}+(1-R)\tau^{T}_{in}\otimes\frac{F(\mathcal{N})}{\tr F(\mathcal{N})}, \notag
\end{equation}
where $E(\mathcal{M}):=\sum_{m\in\mathcal{M}}E_{m},F(\mathcal{N}):=\sum_{n\in\mathcal{N}}F_{n}$ and $R:=\tr\left[rE(\mathcal{M})\right]/Z(\mathcal{M},\mathcal{N})$. By applying Lemma \ref{lemma:fef_formula} in the Appendix, we obtain
\begin{align}
&f(\rho(\mathcal{M},\mathcal{N})) \notag \\
&=\frac{1}{4}+\frac{1}{4Z(\mathcal{M},\mathcal{N})}
\sqrt{
\begin{aligned}
&\left((r\Delta\rho_{in}\Delta E(\mathcal{M})-(1-r)\Delta\tau_{in}\Delta F(\mathcal{N})\right)^{2} \\
&+4r(1-r)\Delta\rho_{in}\Delta E(\mathcal{M})\Delta\tau_{in}\Delta F(\mathcal{N})\Biggl(\left|\braket{\psi|\phi}\right|\left|\braket{\psi'|\phi'}\right| \\
&+\sqrt{\left(1-\left|\braket{\psi|\phi}\right|^{2}\right)\left(1-\left|\braket{\psi'|\phi'}\right|^{2}\right)}\Biggr)^{2}
\end{aligned}
} \notag
\end{align}
where $\ket{\psi}$,$\ket{\phi}$,$\ket{\psi'}$, and $\ket{\phi'}$ are the eigenvectors of the maximum eigenvalue of $\rho_{in}$, $\tau_{in}$, $E(\mathcal{M})$, and $F(\mathcal{N})$, respectively. Finally, by substituting this value into the right-hand side of (\ref{eq:ur})
\begin{align}
&C(\mathcal{M},\mathcal{N}) \notag \\
&=\frac{Z(\mathcal{M},\mathcal{N})}{2}+\frac{1}{2}
\sqrt{
\begin{aligned}
&\left((r\Delta\rho_{in}\Delta E(\mathcal{M})-(1-r)\Delta\tau_{in}\Delta F(\mathcal{N})\right)^{2} \\
&+4r(1-r)\Delta\rho_{in}\Delta E(\mathcal{M})\Delta\tau_{in}\Delta F(\mathcal{N})\Biggl(\left|\braket{\psi|\phi}\right|\left|\braket{\psi'|\phi'}\right| \\
&+\sqrt{\left(1-\left|\braket{\psi|\phi}\right|^{2}\right)\left(1-\left|\braket{\psi'|\phi'}\right|^{2}\right)}\Biggr)^{2}
\end{aligned}
}. \notag
\end{align}
\end{proof}
Let us compare $C(\mathcal{M},\mathcal{N})$ with the trivial upper bound $T(\mathcal{M},\mathcal{N})$. By using Lemma \ref{lemma:fef_formula_prod}, we obtain the value of $T(\mathcal{M},\mathcal{N})$ as
\begin{equation}
T(\mathcal{M},\mathcal{N})=\frac{1}{2}\left(r\left(\tr E(\mathcal{M})+\Delta\rho_{in}\Delta E(\mathcal{M})\right)+(1-r)\left(\tr F(\mathcal{N})+\Delta\tau_{in}\Delta F(\mathcal{N})\right)\right). \label{eq:ur_bound_trivial}
\end{equation}
It is not difficult to see that $C(\mathcal{M},\mathcal{N})=T(\mathcal{M},\mathcal{N})$ holds for $0<r<1$ if and only if $\left|\braket{\psi|\phi}\right|=\left|\braket{\psi'|\phi'}\right|$ holds.
Therefore we obtain the following observation.
\begin{proposition}
Consider the same situation as in Theorem 1.
The quantity $C(\mathcal{M},\mathcal{N})$ gives a nontrivial bound for $0<r<1$ in a sense that $C(\mathcal{M},\mathcal{N})<T(\mathcal{M},\mathcal{N})$ if and only if $\left|\braket{\psi|\phi}\right|\neq\left|\braket{\psi'|\phi'}\right|$.
\end{proposition}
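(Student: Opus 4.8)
The plan is to compare the two explicit formulas (\ref{eq:ur_bound}) for $C(\mathcal{M},\mathcal{N})$ and (\ref{eq:ur_bound_trivial}) for $T(\mathcal{M},\mathcal{N})$ directly, reducing the asserted equivalence to a single scalar inequality. First I would abbreviate $a:=r\Delta\rho_{in}\Delta E(\mathcal{M})$, $b:=(1-r)\Delta\tau_{in}\Delta F(\mathcal{N})$, $s:=\left|\braket{\psi|\phi}\right|$, $t:=\left|\braket{\psi'|\phi'}\right|$, and set $K:=st+\sqrt{(1-s^{2})(1-t^{2})}$. Since $Z(\mathcal{M},\mathcal{N})=r\tr E(\mathcal{M})+(1-r)\tr F(\mathcal{N})$, the trivial bound rewrites as $T(\mathcal{M},\mathcal{N})=\tfrac{1}{2}Z(\mathcal{M},\mathcal{N})+\tfrac{1}{2}(a+b)$, while (\ref{eq:ur_bound}) reads $C(\mathcal{M},\mathcal{N})=\tfrac{1}{2}Z(\mathcal{M},\mathcal{N})+\tfrac{1}{2}\sqrt{(a-b)^{2}+4abK^{2}}$. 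Hence $C(\mathcal{M},\mathcal{N})\le T(\mathcal{M},\mathcal{N})$ is equivalent to $\sqrt{(a-b)^{2}+4abK^{2}}\le a+b$.

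Next, because $r\in(0,1)$ and the eigenvectors $\ket{\psi},\ket{\phi},\ket{\psi'},\ket{\phi'}$ of the \emph{maximal} eigenvalues are well defined only when the corresponding spectral gaps are strictly positive, I may assume $a>0$ and $b>0$; in particular both sides of the last inequality are nonnegative, so squaring is an equivalence. Using $(a+b)^{2}-(a-b)^{2}=4ab$, squaring turns $C\le T$ into $4ab(K^{2}-1)\le 0$, and after dividing by the positive quantity $4ab$ into $K\le 1$; the strict version $C<T$ becomes $K<1$.

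The remaining and genuinely substantive step is to show that $K=st+\sqrt{(1-s^{2})(1-t^{2})}\le 1$ for all $s,t\in[0,1]$, with equality precisely when $s=t$. The cleanest route is the substitution $s=\cos\alpha$, $t=\cos\beta$ with $\alpha,\beta\in[0,\pi/2]$, which gives $K=\cos\alpha\cos\beta+\sin\alpha\sin\beta=\cos(\alpha-\beta)$; thus $K\le 1$, with equality iff $\alpha=\beta$, i.e.\ iff $s=t$. (Alternatively, isolating the square root and squaring reduces $1-K\ge 0$ to $(s-t)^{2}\ge 0$, from which the equality case is immediate.) Combining this with the previous reduction gives $C(\mathcal{M},\mathcal{N})<T(\mathcal{M},\mathcal{N})$ iff $K<1$ iff $s\ne t$, that is, iff $\left|\braket{\psi|\phi}\right|\ne\left|\braket{\psi'|\phi'}\right|$, which is the claim.

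I would expect the main obstacle to be purely one of bookkeeping rather than of depth: confirming that $a,b>0$ under the hypotheses, so that squaring and dividing are legitimate and the equality case is not masked by a degenerate vanishing factor, and verifying the \emph{equality condition} $K=1\iff s=t$ rather than merely the inequality $K\le 1$. Both are elementary, but both are indispensable for the ``if and only if'' to hold for every $0<r<1$.
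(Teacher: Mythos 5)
Your proposal is correct and follows exactly the route the paper intends: the paper simply asserts ``it is not difficult to see'' that $C=T$ iff $\left|\braket{\psi|\phi}\right|=\left|\braket{\psi'|\phi'}\right|$, and your reduction to $\sqrt{(a-b)^{2}+4abK^{2}}\le a+b$ and then to $K\le 1$ with equality iff $s=t$ is precisely that omitted computation. Your explicit flagging of the degenerate case $ab=0$ (where $C=T$ holds regardless and the maximal eigenvectors are not unique) is a point the paper silently glosses over, and is worth keeping.
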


To provide an example of Theorem \ref{theorem:main}, we consider an ideal case; i.e., pure states are input, and measurements are performed with rank-one projection-valued measures (PVMs).
\begin{corollary} \label{cor:sec_result}
Let $\Omega_{E}:=\{m_{0},m_{1}\}$ and $\Omega_{F}:=\{n_{0},n_{1}\}$, and $\{\ket{e_{m}}\}_{m\in\Omega_{E}}$ and $\{\ket{f_{n}}\}_{n\in\Omega_{F}}$ be arbitrary orthonormal bases, and $\ket{\psi_{in}}$ and $\ket{\phi_{in}}$ be arbitrary pure states. A random unitary channel $\Lambda$ is measured with $\mathcal{T}_{1}=(\xi\otimes\ket{\psi_{in}}\bra{\psi_{in}},E=\{I_{anc}\otimes \ket{e_{m}}\bra{e_{m}}\}_{m\in\Omega_{E}})$ with probability $r$ and $\mathcal{T}_{2}=(\xi'\otimes\ket{\phi_{in}}\bra{\phi_{in}},F=\{I_{anc}\otimes \ket{f_{n}}\bra{f_{n}}\}_{n\in\Omega_{F}})$ with probability $1-r$. Then, the following inequality holds
\begin{equation}
rp_{m}(\Lambda)+(1-r)q_{n}(\Lambda)
\leq C(\{m\},\{n\}), \label{eq:sec_result}
\end{equation}
where $p_{m}(\Lambda)$ and $q_{n}(\Lambda)$ are probabilities of obtaining outcomes $m$ and $n$ for $\mathcal{T}_{1}$ and $\mathcal{T}_{2}$, respectively, and $C(\{m\},\{n\})$ is given by
\begin{equation}
C(\{m\},\{n\})
=\frac{1}{2}+\sqrt{
\begin{aligned}
&\left(r-\frac{1}{2}\right)^{2}+r(1-r)\Biggl(\left|\braket{\psi_{in}|\phi_{in}}\right|\left|\braket{e_{m}|f_{n}}\right| \\
&+\sqrt{\left(1-\left|\braket{\psi_{in}|\phi_{in}}\right|^{2}\right)\left(1-\left|\braket{e_{m}|f_{n}}\right|^{2}\right)}\Biggr)^{2}
\end{aligned}
}.
\label{eq:sec_result_bound}
\end{equation}
\end{corollary}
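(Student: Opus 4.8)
The plan is to derive Corollary \ref{cor:sec_result} directly from Theorem \ref{theorem:main} by specializing every quantity in (\ref{eq:ur_bound}) to singleton index sets together with pure-state inputs and rank-one PVMs; no new estimate is required. First I would take $\mathcal{M}=\{m\}$ and $\mathcal{N}=\{n\}$, so that $E(\mathcal{M})=E_{m}=\ket{e_{m}}\bra{e_{m}}$ and $F(\mathcal{N})=F_{n}=\ket{f_{n}}\bra{f_{n}}$ are rank-one projections, while the inputs are the pure states $\rho_{in}=\ket{\psi_{in}}\bra{\psi_{in}}$ and $\tau_{in}=\ket{\phi_{in}}\bra{\phi_{in}}$.

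The second step is to evaluate the spectral data entering (\ref{eq:ur_bound}). Each of $\rho_{in}$, $\tau_{in}$, $E(\mathcal{M})$, $F(\mathcal{N})$ is a rank-one projection on $\mathbb{C}^{2}$ and hence has the non-degenerate spectrum $\{0,1\}$; consequently all four gaps equal unity, $\Delta\rho_{in}=\Delta\tau_{in}=\Delta E(\mathcal{M})=\Delta F(\mathcal{N})=1$, and the maximal-eigenvalue eigenvector is in each case the defining ray, namely $\ket{\psi}=\ket{\psi_{in}}$, $\ket{\phi}=\ket{\phi_{in}}$, $\ket{\psi'}=\ket{e_{m}}$, and $\ket{\phi'}=\ket{f_{n}}$ (the non-degeneracy guarantees these are unambiguous up to an irrelevant phase). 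Hence $|\braket{\psi|\phi}|=|\braket{\psi_{in}|\phi_{in}}|$ and $|\braket{\psi'|\phi'}|=|\braket{e_{m}|f_{n}}|$. I would also record the normalization $Z(\{m\},\{n\})=\tr[rE_{m}+(1-r)F_{n}]=r+(1-r)=1$, using that a rank-one projection has unit trace.

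The final step is to substitute into (\ref{eq:ur_bound}). With every $\Delta$ equal to $1$, the squared difference under the radical collapses to $(r-(1-r))^{2}=(2r-1)^{2}=4(r-\tfrac{1}{2})^{2}$, and the cross term becomes $4r(1-r)$ times the squared bracket of overlaps, which is precisely the bracket in (\ref{eq:sec_result_bound}). Pulling the common factor $4$ out of the radicand replaces the prefactor $\tfrac{1}{2}$ by $1$ in front of the reduced square root, and adding $Z/2=\tfrac{1}{2}$ reproduces (\ref{eq:sec_result_bound}) exactly. The stated inequality (\ref{eq:sec_result}) then follows immediately from (\ref{eq:main_result}). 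I do not expect a substantive obstacle: the argument is pure bookkeeping. The one point worth checking carefully is that the factor $4$ factors cleanly out of \emph{both} terms under the square root, so that the prefactor $\tfrac{1}{2}$ halves correctly into $1$ and the leftover $Z/2$ supplies precisely the additive constant $\tfrac{1}{2}$ appearing in (\ref{eq:sec_result_bound}).
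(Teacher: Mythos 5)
Your proposal is correct and follows exactly the paper's own route: specialize Theorem \ref{theorem:main} to singleton sets, observe that $Z=1$, that all four spectral gaps $\Delta$ equal unity for rank-one projections, identify the maximal eigenvectors with the defining rays, and carry out the factor-of-$4$ bookkeeping under the radical. Nothing to add.
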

\begin{proof}
By calculating directly, $Z(\mathcal{M},\mathcal{N})=1$. Because $\Delta(\cdot)$ means the difference between the maximum and the minimum eigenvalues of the argument, for pure states, their values are unity. Obviously, $\ket{\psi}=\ket{\psi_{in}},\ket{\phi}=\ket{\phi_{in}},\ket{\psi'}=\ket{e_{m}},\ket{\phi'}=\ket{f_{n}}$. From these facts, (\ref{eq:ur_bound}) becomes
\begin{equation}
C(\{m\},\{n\})
=\frac{1}{2}+\sqrt{
\begin{aligned}
&\left(r-\frac{1}{2}\right)^{2}+r(1-r)\Biggl(\left|\braket{\psi_{in}|\phi_{in}}\right|\left|\braket{e_{m}|f_{n}}\right| \\
&+\sqrt{\left(1-\left|\braket{\psi_{in}|\phi_{in}}\right|^{2}\right)\left(1-\left|\braket{e_{m}|f_{n}}\right|^{2}\right)}\Biggr)^{2}
\end{aligned}
} \notag
\end{equation}
\end{proof}
The value of $T(\{m\},\{n\})$ is easily calculated as $T(\{m\},\{n\})=1$ from (\ref{eq:ur_bound_trivial}). Now, let us consider the following example.
\begin{example} \label{ex:evidence}
In Corollary \ref{cor:sec_result}, consider a case where the input states are $\ket{\psi_{in}}=\ket{0}$ and $\ket{\phi_{in}}=\cos(\theta/2)\ket{0}+\sin(\theta/2)\ket{1}$ with $0\leq\theta\leq\pi$, and the measurements for the output states are performed with $\{\ket{e_{m_{0}}}=\ket{0},\ket{e_{m_{1}}}=\ket{1}\}$ and $\{\ket{f_{n_{0}}}=\ket{x_{0}}:=(\ket{0}+\ket{1})/\sqrt{2},\ket{f_{n_{1}}}=\ket{x_{1}}:=(\ket{0}-\ket{1})/\sqrt{2}\}$, and $r=1/2$.
The formula (\ref{eq:sec_result_bound}) gives
\begin{equation}
C(\{m\},\{n\})=\frac{1}{2}+\frac{1}{2}\cos\left(\frac{\theta}{2}-\frac{\pi}{4}\right). \label{eq:ex_bound}
\end{equation}
Therefore, the equality $C(\{m\},\{n\})=T(\{m\},\{n\})$ holds if and only if $\theta=\pi/2$.
\end{example}

\section{Conclusions and Discussions} \label{sec:c_and_d}
In this paper, we investigated an uncertainty relation specialized in random unitary channels acting on a qubit.
Theorem \ref{theorem:main} presents an explicit formula for a pair of ancilla-free channel measurements.
Although our inequality can be applied to rather limited situations, the bound is so simple that it can easily be calculated.
Such an explicit bound is expected to be useful in drawing any physical observations.

It is interesting to compare our bound with those derived by Xiao \emph{et al}.
Let us consider Example \ref{ex:evidence} with $\theta=\pi$.
In this case, the bound (\ref{eq:ex_bound}) becomes
\begin{equation}
C(\{m\},\{n\})=\frac{1}{2}\left(1+\frac{1}{\sqrt{2}}\right)<1. \label{eq:example_for_violation}
\end{equation}
This means that there are no random unitary channels so that both probability distributions $\mathbf{p}(\Lambda):=\{p_{m}(\Lambda)\}_{m\in\Omega_{E}}$ and $\mathbf{q}(\Lambda):=\{q_{n}(\Lambda)\}_{n\in\Omega_{F}}$ become sharp.
However, a measure-and-prepare channel defined as
\begin{equation}
\Phi(\rho):=\ket{0}\bra{0}\rho\ket{0}\bra{0}+\ket{x_{0}}\bra{1}\rho\ket{1}\bra{x_{0}}.
\end{equation}
attains $p_{m_{0}}(\Phi)=q_{n_{0}}(\Phi)=1$. This fact shows that the nontriviality of (\ref{eq:example_for_violation}) is due to the restriction of the channels to random unitary ones.

Finally, we mention future problems related to our uncertainty relation.
Our experimental scenario is not the most general one; i.e., the channel measurements are limited to ancilla-free, and the evolving system is a qubit.
In the proof of Theorem \ref{theorem:main}, we showed that the bound of our uncertainty relation is given by the FEF (see (\ref{eq:ur})).
However, an analytical solution for the FEF of arbitrary quantum states is only known in two-qubit systems.
To generalize our results, one may be able to use an upper bound of the FEF derived in Ref. \citen{PhysRevA.78.032332}, but it is not trivial. Another direction of study is the application of our uncertainty relation to quantum information processing. This direction of study is also nontrivial, but the application of the entropic uncertainty relations to quantum information processing may be helpful\cite{RevModPhys.89.015002}.

\section*{Acknowledgments}
This work was supported by JST SPRING, Grant Number JPMJSP2110. TM acknowledges financial support from JSPS (KAKENHI Grant Number
20K03732).

\section*{Appendix A. The FEF of separable states}
In this section, we provide a formula for the FEF of separable states consisting of two product states. The FEF was first introduced in Ref. \citen{PhysRevA.54.3824}, and its analytical solution for two-qubit systems was derived in Ref. \citen{GRONDALSKI2002573}. In the course of obtaining an upper bound for two-qudit systems, the analytical solution was translated into the following theorem.\cite{PhysRevA.78.032332}
\begin{theorem} \label{theorem:fef_qubit}
Let $\{\lambda_{i}\}^{3}_{i=1}$ and $\{\mu_{j}\}^{3}_{j=1}$ be arbitrary generators of $SU(2)$ with the orthogonality relations $\tr\lambda_{i}\lambda_{j}=2\delta_{ij}$ and $\tr\mu_{i}\mu_{j}=2\delta_{ij}$. For an arbitrary two-qubit state $\rho$, the FEF is given by
\begin{equation}
f(\rho)=\frac{1}{4}+4\|M(\rho)^{T}M(P_{+})\|_{KF}, \label{eq:fef_qubit}
\end{equation}
where $M(\cdot)$ denotes a matrix whose elements are given by $(M(\cdot))_{ij}=(1/4)\tr\left[(\cdot)(\lambda_{i}\otimes\mu_{j})\right]$, and $P_{+}:=\ket{\psi_{+}}\bra{\psi_{+}}$, and $M(\cdot)^{T}$ denotes transposed matrix, and $\|\cdot\|_{KF}$ is the Ky Fan norm defined as $\|X\|_{KF}:=\tr\sqrt{XX^{\dagger}}$ for an arbitrary matrix $X$.
\end{theorem}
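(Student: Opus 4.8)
The plan is to reduce the evaluation of the FEF (\ref{eq:fef_def}) to a Procrustes-type optimization over the rotation group by means of the Bloch/generator decomposition of two-qubit operators. First I would rewrite the FEF as a Hilbert--Schmidt overlap: since $f(\rho)=\max_{U}\tr[(I\otimes U)P_{+}(I\otimes U^{\dagger})\rho]=\max_{U}\tr[\rho_{U}\rho]$ with $\rho_{U}:=(I\otimes U)P_{+}(I\otimes U^{\dagger})$ a maximally entangled pure state, it suffices to control $\tr[\rho_{U}\rho]$ for all such $\rho_{U}$. I would then expand every operator in the orthogonal basis $\{I,\lambda_{i}\}\otimes\{I,\mu_{j}\}$, writing $\rho=\tfrac14(I\otimes I+\vec a\cdot\vec\lambda\otimes I+I\otimes\vec b\cdot\vec\mu+\sum_{ij}T_{ij}\,\lambda_{i}\otimes\mu_{j})$ with $T_{ij}=4M(\rho)_{ij}$, and similarly for $\rho_{U}$. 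Because a maximally entangled state has maximally mixed marginals, its single-party Bloch vectors vanish, so $\rho_{U}=\tfrac14(I\otimes I+\sum_{ij}4M(\rho_{U})_{ij}\,\lambda_{i}\otimes\mu_{j})$.

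Using the orthogonality relations $\tr\lambda_{i}\lambda_{j}=2\delta_{ij}$ and $\tr\lambda_{i}=0$, all cross terms drop and the overlap collapses to $\tr[\rho_{U}\rho]=\tfrac14+4\,\tr[M(\rho_{U})^{T}M(\rho)]$. Next I would push the unitary through the correlation matrix: the adjoint action of $U$ is a rotation, $U^{\dagger}\mu_{j}U=\sum_{k}O_{jk}\mu_{k}$ with $O\in SO(3)$, which gives $M(\rho_{U})=M(P_{+})O^{T}$. Hence $f(\rho)=\tfrac14+4\max_{O\in SO(3)}\tr[O\,M(P_{+})^{T}M(\rho)]$, and the whole problem is reduced to maximizing $\tr[OX]$ over rotations, where $X:=M(P_{+})^{T}M(\rho)$.

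Finally I would evaluate $\max_{O\in SO(3)}\tr[OX]$ through the singular value decomposition $X=\tilde W\tilde\Sigma\tilde V^{T}$: absorbing $\tilde W,\tilde V$ into the optimization variable $Q=\tilde V^{T}O\tilde W$ reduces the objective to $\sum_{i}Q_{ii}\tilde\sigma_{i}$ over orthogonal $Q$ of prescribed determinant, whose maximum is the sum of singular values, i.e. the Ky Fan norm $\|X\|_{KF}=\|M(\rho)^{T}M(P_{+})\|_{KF}$. Substituting back then yields (\ref{eq:fef_qubit}).

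The step I expect to be the main obstacle is precisely this last optimization, because $U$ ranges over the unitary group and therefore $O$ ranges only over the \emph{proper} rotations $SO(3)$, not all of $O(3)$. The unconstrained Procrustes maximum (the full Ky Fan norm) is attained only when the optimal $Q$ can be chosen with $\det Q=+1$; whether this is possible is governed by $\operatorname{sgn}\det X=-\operatorname{sgn}\det M(\rho)$ together with the fixed sign of $\det M(P_{+})$ arising from $M(P_{+})=\tfrac14\,\mathrm{diag}(1,-1,1)$. The delicate part of the argument is thus the determinant bookkeeping needed to show that the proper-rotation maximum coincides with $\sum_{i}\tilde\sigma_{i}$ rather than with $\tilde\sigma_{1}+\tilde\sigma_{2}-\tilde\sigma_{3}$; this orientation analysis, rather than the algebra of the Bloch expansion, is where I would concentrate the care.
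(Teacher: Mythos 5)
The paper gives no proof of Theorem~\ref{theorem:fef_qubit}: it is imported verbatim from Refs.~\citen{GRONDALSKI2002573,PhysRevA.78.032332}, so your argument has to stand on its own. Your reduction is the standard and correct one: the Bloch expansion with vanishing marginals of $(I\otimes U)P_{+}(I\otimes U^{\dagger})$ gives $\bra{\psi_{+}}(I\otimes U^{\dagger})\rho(I\otimes U)\ket{\psi_{+}}=\tfrac14+4\tr\left[O\,M(P_{+})^{T}M(\rho)\right]$ with $O\in SO(3)$ the adjoint image of $U$, and every $O\in SO(3)$ is attained. The problem is exactly the step you flag and then defer. For $X:=M(P_{+})^{T}M(\rho)$ with singular values $s_{1}\ge s_{2}\ge s_{3}$ one has $\max_{O\in SO(3)}\tr[OX]=s_{1}+s_{2}+\operatorname{sgn}(\det X)\,s_{3}$, which equals $\|X\|_{KF}$ only when $\det X\ge 0$, i.e.\ only when $\det M(\rho)\le 0$ (since $\det M(P_{+})=-1/64<0$). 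That bookkeeping cannot be closed in general, because $\det M(\rho)\le 0$ is \emph{not} a property of all two-qubit states. Concretely, for the separable state
\begin{equation}
\rho_{*}=\frac{1}{3}\sum_{i=1}^{3}\frac{I+\sigma_{i}}{2}\otimes\frac{I+\sigma_{i}}{2},\qquad M(\rho_{*})=\frac{1}{12}I,\qquad M(P_{+})=\frac{1}{4}\operatorname{diag}(1,-1,1), \notag
\end{equation}
formula (\ref{eq:fef_qubit}) predicts $\tfrac14+4\cdot\tfrac{3}{48}=\tfrac12$, whereas the true value is $f(\rho_{*})=\tfrac14+\tfrac{1}{12}\max_{O\in SO(3)}\left(O_{11}-O_{22}+O_{33}\right)=\tfrac13$, because $O\operatorname{diag}(1,-1,1)$ is an improper orthogonal matrix and the trace of such a matrix is at most $1$. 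So for arbitrary two-qubit states the statement holds only as the inequality $f(\rho)\le\tfrac14+4\|M(\rho)^{T}M(P_{+})\|_{KF}$; the correct equality is $f(\rho)=\tfrac14+4\left(s_{1}+s_{2}-\operatorname{sgn}(\det M(\rho))\,s_{3}\right)$.

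Your instinct about where the danger lies was therefore exactly right, and the honest conclusion of your argument is that the theorem needs the extra hypothesis $\det M(\rho)\le 0$ (equivalently $\det X\ge 0$) for the equality to hold. Note that the gap is harmless for the rest of the paper: the only use of Theorem~\ref{theorem:fef_qubit} is in Lemma~\ref{lemma:fef_formula}, where $\rho=r\rho_{A}\otimes\rho_{B}+(1-r)\tau_{A}\otimes\tau_{B}$, so $M(\rho)$ is a sum of two rank-one matrices, $s_{3}=0$, the sign of the determinant is irrelevant, and the equality --- hence Theorem~\ref{theorem:main} --- survives. A clean fix for your write-up is to prove the theorem only under the rank or determinant hypothesis actually needed downstream.
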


By using Theorem \ref{theorem:fef_qubit}, we obtain following lemma.
\begin{lemma} \label{lemma:fef_formula}
Suppose that a two-qubit state $\rho$ is represented by
\begin{equation}
\rho=r\rho_{A}\otimes\rho_{B}+(1-r)\tau_{A}\otimes\tau_{B}, \notag
\end{equation}
where $0\leq r\leq 1$. The FEF of the state $\rho$ is given by
\begin{equation}
f(\rho)=\frac{1}{4}+\frac{1}{4}
\sqrt{
\begin{aligned}
&\left((r\Delta\rho_{A}\Delta\rho_{B})-(1-r)\Delta\tau_{A}\Delta\tau_{B}\right)^{2} \\
&+4r(1-r)\Delta\rho_{A}\Delta\rho_{B}\Delta\tau_{A}\Delta\tau_{B}\Biggl(\left|\braket{\psi_{A}|\phi_{A}}\right|\left|\braket{\psi_{B}|\phi_{B}}\right| \\
&+\sqrt{\left(1-\left|\braket{\psi_{A}|\phi_{A}}\right|^{2}\right)\left(1-\left|\braket{\psi_{B}|\phi_{B}}\right|^{2}\right)}\Biggr)^{2}
\end{aligned}
},
\label{eq:fef_formula}
\end{equation}
where $\Delta(\cdot)$ denotes the difference between the maximum and the minimum eigenvalues of its argument and $\ket{\psi_{X}}$ and $\ket{\phi_{X}}$ denote the eigenvectors of the maximum eigenvalues of the $\rho_{X}$ and $\tau_{X}$, respectively ($X=A,B$).
\end{lemma}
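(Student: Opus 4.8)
The plan is to apply Theorem~\ref{theorem:fef_qubit}, which writes the FEF as $f(\rho)=\frac14+4\|M(\rho)^{T}M(P_{+})\|_{KF}$, and then to exploit the product structure of the two summands of $\rho$. First I would record how $M$ acts on a product state: since $(M(\sigma\otimes\omega))_{ij}=\frac14\tr[\sigma\lambda_{i}]\tr[\omega\mu_{j}]$, the matrix $M(\sigma\otimes\omega)$ is the rank-one outer product $\frac14\vec{s}\,\vec{t}^{\,T}$, where $\vec{s},\vec{t}$ are the Bloch vectors of the two qubit factors with respect to $\{\lambda_{i}\}$ and $\{\mu_{j}\}$. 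For a qubit $\frac12(I+\vec{s}\cdot\vec{\lambda})$ the eigenvalues are $\frac{1\pm|\vec{s}|}{2}$, so $|\vec{s}|=\Delta\sigma$ and the unit direction $\hat{s}$ is the Bloch direction of the eigenvector of the larger eigenvalue. Hence $M(\rho)=r\,M(\rho_{A}\otimes\rho_{B})+(1-r)M(\tau_{A}\otimes\tau_{B})$ is a sum of two rank-one matrices, $M(\rho)=s_{1}\hat{a}\hat{b}^{T}+s_{2}\hat{c}\hat{d}^{T}$, with $s_{1}=\frac14 r\,\Delta\rho_{A}\Delta\rho_{B}$, $s_{2}=\frac14(1-r)\Delta\tau_{A}\Delta\tau_{B}$, and $\hat{a},\hat{b},\hat{c},\hat{d}$ the Bloch directions of $\rho_{A},\rho_{B},\tau_{A},\tau_{B}$.

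Next I would dispose of the factor $M(P_{+})$. A direct evaluation of $(M(P_{+}))_{ij}=\frac14\tr[P_{+}(\lambda_{i}\otimes\mu_{j})]$ shows that for any admissible generators $M(P_{+})$ is $\frac14$ times an orthogonal matrix (for Pauli generators it is $\frac14\,\mathrm{diag}(1,-1,1)$). Because the Ky Fan (trace) norm is invariant under left/right multiplication by orthogonal matrices and under transposition, the target collapses to $\|M(\rho)^{T}M(P_{+})\|_{KF}=\frac14\|M(\rho)\|_{KF}$, so that $f(\rho)=\frac14+\|M(\rho)\|_{KF}$. It then remains to compute the trace norm of the rank-two matrix $M(\rho)$, that is, the sum of its two nonzero singular values $\sigma_{1}+\sigma_{2}$.

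For this I would use $(\sigma_{1}+\sigma_{2})^{2}=(\sigma_{1}^{2}+\sigma_{2}^{2})+2\sigma_{1}\sigma_{2}$. The first piece is the Frobenius norm, $\sigma_{1}^{2}+\sigma_{2}^{2}=\tr[M(\rho)M(\rho)^{T}]=s_{1}^{2}+s_{2}^{2}+2s_{1}s_{2}(\hat{a}\cdot\hat{c})(\hat{b}\cdot\hat{d})$. For the product $\sigma_{1}\sigma_{2}$ I would choose orthonormal bases of the two-dimensional column and row spaces spanned by $\{\hat{a},\hat{c}\}$ and $\{\hat{b},\hat{d}\}$; in these bases $M(\rho)$ is represented by a genuine $2\times2$ matrix whose determinant equals $\pm\sigma_{1}\sigma_{2}$, and a short computation gives $\sigma_{1}\sigma_{2}=s_{1}s_{2}\,|\sin\alpha\sin\beta|$ with $\cos\alpha=\hat{a}\cdot\hat{c}$ and $\cos\beta=\hat{b}\cdot\hat{d}$. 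Writing $p=|\braket{\psi_{A}|\phi_{A}}|$, $q=|\braket{\psi_{B}|\phi_{B}}|$ and using the qubit relation $\hat{a}\cdot\hat{c}=2p^{2}-1$ (hence $\sin\alpha=2p\sqrt{1-p^{2}}$), and likewise for $\beta$, the identity $\tfrac12(1+\cos\alpha\cos\beta+\sin\alpha\sin\beta)=\bigl(pq+\sqrt{(1-p^{2})(1-q^{2})}\bigr)^{2}$ turns the cross term into exactly the bracket of (\ref{eq:fef_formula}). Collecting terms yields $(\sigma_{1}+\sigma_{2})^{2}=(s_{1}-s_{2})^{2}+4s_{1}s_{2}\bigl(pq+\sqrt{(1-p^{2})(1-q^{2})}\bigr)^{2}$, and substituting the values of $s_{1},s_{2}$ reproduces the claimed formula.

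The main obstacle I anticipate is the evaluation of $\sigma_{1}\sigma_{2}$ for a sum of two rank-one matrices and, more delicately, the recognition that the resulting combination of Bloch-vector overlaps equals the squared bracket $\bigl(pq+\sqrt{(1-p^{2})(1-q^{2})}\bigr)^{2}$. The determinant-in-an-orthonormal-basis trick keeps the first computation short, while the half-angle relation $\hat{a}\cdot\hat{c}=2|\braket{\psi_{A}|\phi_{A}}|^{2}-1$ is the key bookkeeping that makes the second step fall into place. One should also confirm the degenerate cases ($s_{1}$ or $s_{2}$ vanishing, or two Bloch directions coinciding), where $M(\rho)$ drops to rank one; these are limits of the generic formula and cause no real difficulty.
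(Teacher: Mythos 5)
Your proof is correct and takes essentially the same route as the paper: both apply Theorem~\ref{theorem:fef_qubit}, use the orthogonality of $4M(P_{+})$ to reduce $\|M(\rho)^{T}M(P_{+})\|_{KF}$ to $\tfrac14\|M(\rho)\|_{KF}$, and evaluate the trace norm of the rank-two correlation matrix from its Frobenius norm together with its determinant (the paper's $\sqrt{x+z+2\sqrt{xz-y^{2}}}$ is exactly your $\sqrt{\sigma_{1}^{2}+\sigma_{2}^{2}+2\sigma_{1}\sigma_{2}}$). The only substantive difference is presentational: you work coordinate-free with Bloch vectors, whereas the paper first normalizes by local unitaries and writes out the $3\times3$ matrix $M(\rho')$ explicitly.
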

\begin{proof}
Let $\ket{\psi_{X,0}}$ and $\ket{\phi_{X,0}}$ be the eigenvectors of the maximal eigenvalues of $\rho_{X}$ and $\tau_{X}$, respectively ($X=A, B$). The minimal eigenvalues of $\rho_{X}$ and $\tau_{X}$ are $1-\|\rho_{X}\|$ and $1-\|\tau_{X}\|$, and let $\ket{\psi_{X,1}}$ and $\ket{\phi_{X,1}}$ be the corresponding eigenvectors, respectively. The spectral decompositions of $\rho_{X}$ and $\tau_{X}$ are given by
\begin{align}
\rho_{X}&=\|\rho_{X}\|\ket{\psi_{X,0}}\bra{\psi_{X,0}}+(1-\|\rho_{X}\|)\|\ket{\psi_{X,1}}\bra{\psi_{X,1}}, \notag \\
\tau_{X}&=\|\tau_{X}\|\ket{\phi_{X,0}}\bra{\phi_{X,0}}+(1-\|\tau_{X}\|)\|\ket{\phi_{X,1}}\bra{\phi_{X,1}}. \notag
\end{align}
By substituting the completeness conditions $\ket{\psi_{X,1}}\bra{\psi_{X,1}}=I-\ket{\psi_{X,0}}\bra{\psi_{X,0}}$ and $\ket{\phi_{X,1}}\bra{\phi_{X,1}}=I-\ket{\phi_{X,0}}\bra{\phi_{X,0}}$, we obtain
\begin{align}
\rho_{X}&=(2\|\rho_{X}\|-1)\ket{\psi_{X,0}}\bra{\psi_{X,0}}+(1-\|\rho_{X}\|)I, \notag \\
\tau_{X}&=(2\|\tau_{X}\|-1)\ket{\phi_{X,0}}\bra{\phi_{X,0}}+(1-\|\tau_{X}\|)I. \notag
\end{align}
The FEF is invariant under the local unitary transformations, i.e., for an arbitrary state $\rho$, the equality
\begin{equation}
f(\rho)=f((V_{A}\otimes V_{B})\rho(V_{A}\otimes V_{B})^{\dagger}) \notag
\end{equation}
follows, where $V_{A}$ and $V_{B}$ are unitary transformations. This property is apparent from $V^{\dagger}_{A}\otimes V^{\dagger}_{B}\ket{\psi_{+}}=I\otimes V^{\dagger}_{B}(V^{\dagger}_{A})^{T}\ket{\psi_{+}}$ where $^{T}$ denotes transposition in the $\{\ket{m}\bra{n}\}^{1}_{m,n=0}$.
Let $\varphi_{X,m}:=-\arg\braket{\psi_{X,m}|\phi_{X,0}}$, we define local unitary transformations as
\begin{equation}
V_{X}:=e^{i\varphi_{X,0}}\ket{0}\bra{\psi_{X,0}}+e^{i\varphi_{X,1}}\ket{1}\bra{\psi_{X,1}}. \notag
\end{equation}
From
\begin{align}
V_{X}\ket{\psi_{X,0}}&=e^{i\varphi_{X,0}}\ket{0}, \notag \\
V_{X}\ket{\phi_{X,0}}
&=|\braket{\psi_{X,0}|\phi_{X,0}}|\ket{0}+|\braket{\psi_{X,1}|\phi_{X,0}}|\ket{1}
=:\ket{\phi_{X,0}'}, \notag
\end{align}
$V_{X}$ operates as
\begin{align}
V_{X}\rho_{X}V^{\dagger}_{X}
&=(2\|\rho_{X}\|-1)\ket{0}\bra{0}+(1-\|\rho_{X}\|)I
=:\rho_{X}', \notag \\
V_{X}\tau_{X}V^{\dagger}_{X}
&=(2\|\tau_{X}\|-1)\ket{\phi_{X,0}'}\bra{\phi_{X,0}'}+(1-\|\tau_{X}\|)I
=:\tau_{X}'. \notag
\end{align}
Therefore, $V_{A}\otimes V_{B}$ operates as
\begin{equation}
(V_{A}\otimes V_{B})\rho(V^{\dagger}_{A}\otimes V^{\dagger}_{B})
=r\rho_{A}'\otimes\rho_{B}'+(1-r)\tau_{A}'\otimes\tau_{B}'
=:\rho'. \notag
\end{equation}
Because the FEF is invariant under local unitary transformations, we calculate the FEF of $\rho'$ instead of $\rho$, and its value is given by (\ref{eq:fef_qubit}), i.e.,
\begin{equation}
f(\rho')=\frac{1}{4}+4\|M(\rho')^{T}M(P_{+})\|_{KF}. \notag
\end{equation}
We choose the Pauli operators
\begin{align}
\sigma_{1}&:=\ket{0}\bra{1}+\ket{1}\bra{0}, \notag \\
\sigma_{2}&:=(-i)\ket{0}\bra{1}+i\ket{1}\bra{0}, \notag \\
\sigma_{3}&:=\ket{0}\bra{0}-\ket{1}\bra{1}, \notag
\end{align}
as the generators, which give the elements of $M(\cdot)$. By calculating directly, we obtain
\begin{equation}
M(P_{+})M(P_{+})^{T}=\frac{1}{16}I. \notag
\end{equation}
Therefore, the FEF is given by
\begin{equation}
f(\rho')=\frac{1}{4}+\|M(\rho')^{T}\|_{KF}. \notag
\end{equation}
By the definition of $M(\cdot)$, the elements of $M(\rho')$ is given by
\begin{equation}
(M(\rho'))_{ij}
=\frac{1}{4}\left(r\left(\tr\rho_{A}'\sigma_{i}\right)\left(\tr\rho_{B}'\sigma_{j}\right)+(1-r)\left(\tr\tau_{A}'\sigma_{i}\right)\left(\tr\tau_{B}'\sigma_{j}\right)\right). \label{eq:corr_rho}
\end{equation}
Let $b_{X,i}:=\braket{\phi_{X,0}'|\sigma_{i}|\phi_{X,0}'}$, and by substituting
\begin{align}
\tr\rho_{X}'\sigma_{i}&=(2\|\rho_{X}\|-1)\delta_{i3}, \notag \\
\tr\tau_{X}'\sigma_{i}&=(2\|\tau_{X}\|-1)b_{X,i}(1-\delta_{i2}), \notag
\end{align}
into (\ref{eq:corr_rho}), we obtain
\begin{equation}
(M(\rho'))_{ij}
=\frac{1}{4}\left(r\Delta\rho_{A}\Delta\rho_{B}\delta_{i3}\delta_{j3}+(1-r)\Delta\tau_{A}\Delta\tau_{B}b_{A,i}b_{B,j}(1-\delta_{i2})(1-\delta_{j2})\right), \notag
\end{equation}
i.e., its matrix is
\begin{equation}
M(\rho')=
\frac{1}{4}
\begin{pmatrix}
(1-r)\Delta\tau_{A}\Delta\tau_{B}b_{A,1}b_{B,1} & 0 & (1-r)\Delta\tau_{A}\Delta\tau_{B}b_{A,1}b_{B,3} \\
0 & 0 & 0 \\
(1-r)\Delta\tau_{A}\Delta\tau_{B}b_{A,3}b_{B,1} & 0 & r\Delta\rho_{A}\Delta\rho_{B}+(1-r)\Delta\tau_{A}\Delta\tau_{B}b_{A,3}b_{B,3} \\
\end{pmatrix}
. \notag
\end{equation}
From the definition of Ky Fan norm, we need to calculate $M(\rho')^{T}M(\rho')$.
However, because the values of the second column and the second row of $M(\rho')$ are all zero, it is sufficient to calculate  $M(\rho'))_{1,1},M(\rho'))_{1,3},M(\rho'))_{3,1},M(\rho'))_{3,3}$.
Additionally, because $M(\rho')^{T}M(\rho')$ is symmetric, $M(\rho'))_{1,3}=M(\rho'))_{3,1}$. By calculating directly, we obtain
\begin{align}
(M(\rho')^{T}M(\rho'))_{11}&=\frac{1}{16}\left((1-r)\Delta\tau_{A}\Delta\tau_{B}b_{B,1}\right)^{2}, \notag \\
(M(\rho')^{T}M(\rho'))_{13}&=\frac{1}{16}(1-r)\Delta\tau_{A}\Delta\tau_{B}b_{B,1}((1-r)\Delta\tau_{A}\Delta\tau_{B}b_{B,3} \notag \\
&\quad+r\Delta\rho_{A}\Delta\rho_{B}b_{A,3}), \notag
\end{align}
and
\begin{align}
(M(\rho')^{T}M(\rho'))_{33}
&=\frac{1}{16}\bigl(\left((1-r)\Delta\tau_{A}\Delta\tau_{B}b_{B,3}\right)^{2}+(r\Delta\rho_{A}\Delta\rho_{B})^{2} \notag \\
&\quad+2r(1-r)\Delta\rho_{A}\Delta\rho_{B}\Delta\tau_{A}\Delta\tau_{B}b_{A,3}b_{B,3}\bigr). \notag
\end{align}
Therefore, by defining
\begin{align}
x&:=(M(\rho')^{T}M(\rho'))_{11}, \notag \\
y&:=(M(\rho')^{T}M(\rho'))_{13}=(M(\rho')^{T}M(\rho'))_{31}, \notag \\
z&:=(M(\rho')^{T}M(\rho'))_{33}, \label{eq:x_y_z_def}
\end{align}
$M(\rho')^{T}M(\rho')$ is represented as
\begin{equation}
M(\rho')^{T}M(\rho')=
\begin{pmatrix}
x & 0 & y \\
0 & 0 & 0 \\
y & 0 & z \\
\end{pmatrix}
. \notag \\
\end{equation}
In this case, the Ky Fan norm of $M(\rho')^{T}$ is given by the formula
\begin{equation}
\|M(\rho')^{T}\|_{KF}=\sqrt{x+z+2\sqrt{xz-y^{2}}}. \label{eq:corr_rho_formula}
\end{equation}
The formula (\ref{eq:corr_rho_formula}) can be obtained from the identity
\begin{equation}
\|M(\rho)^{T}\|_{KF}=\sqrt{(\|M(\rho)^{T}\|_{KF})^{2}}=\sqrt{\left(\sqrt{\lambda_{+}}+\sqrt{\lambda_{-}}\right)^{2}}=\sqrt{\lambda_{+}+\lambda_{-}+2\sqrt{\lambda_{+}\lambda_{-}}}, \notag
\end{equation}
where $\lambda_{\pm}$ are eigenvalues of $M(\rho')^{T}M(\rho')$, and are defined as
\begin{equation}
\lambda_{\pm}:=\frac{x+z\pm\sqrt{(x+z)^{2}-4(xz-y^{2})}}{2}. \notag
\end{equation}
By substituting (\ref{eq:x_y_z_def}) into (\ref{eq:corr_rho_formula}), we obtain
\begin{equation}
\|M(\rho')^{T}\|_{KF}
=\frac{1}{4}\sqrt{
\begin{aligned}
&\left((r\Delta\rho_{A}\Delta\rho_{B})-(1-r)\Delta\tau_{A}\Delta\tau_{B}\right)^{2} \\
&+4r(1-r)\Delta\rho_{A}\Delta\rho_{B}\Delta\tau_{A}\Delta\tau_{B}\left|\braket{\phi_{A,0}'|\phi_{B,0}'}\right|^{2}
\end{aligned}
}. \label{eq:corr_rho_prime}
\end{equation}
From the definition of $\ket{\phi_{X,0}'}$, the equation
\begin{align}
&\braket{\phi_{A,0}'|\phi_{B,0}'}= \notag \\
&\left|\braket{\psi_{A,0}|\phi_{A,0}}\right|\left|\braket{\psi_{B,0}|\phi_{B,0}}\right|+\sqrt{\left(1-\left|\braket{\psi_{A,0}|\phi_{A,0}}\right|^{2}\right)\left(1-\left|\braket{\psi_{B,0}|\phi_{B,0}}\right|^{2}\right)} \label{eq:prime_to_no_prime}
\end{align}
follows. Finally, by substituting (\ref{eq:prime_to_no_prime}) into (\ref{eq:corr_rho_prime}), and defining $\ket{\psi_{X}}:=\ket{\psi_{X,0}}$ and $\ket{\phi_{X}}:=\ket{\phi_{X,0}}$, we obtain (\ref{eq:fef_formula}).
\end{proof}

In Lemma \ref{lemma:fef_formula}, by substituting $r=1$, we can get a formula for the FEF of product states as follows.
\begin{lemma} \label{lemma:fef_formula_prod}
The FEF of a state $\rho:=\rho_{A}\otimes\rho_{B}$ is given by
\begin{equation}
f(\rho)=\frac{1}{4}+\frac{1}{4}\Delta\rho_{A}\Delta\rho_{B}.
\end{equation}
\end{lemma}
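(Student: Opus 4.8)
The plan is to obtain this formula as an immediate specialization of Lemma \ref{lemma:fef_formula} by setting $r=1$, exactly as the surrounding text anticipates. First I would observe that the product state $\rho = \rho_A \otimes \rho_B$ is precisely the $r=1$ instance of the state $r\rho_A\otimes\rho_B + (1-r)\tau_A\otimes\tau_B$ that appears in the hypothesis of Lemma \ref{lemma:fef_formula}. The operators $\tau_A$ and $\tau_B$ (and hence their eigenvectors $\ket{\phi_A},\ket{\phi_B}$) may be chosen arbitrarily, since their entire contribution is weighted by the factor $1-r$, which now vanishes; the hypotheses of Lemma \ref{lemma:fef_formula} are therefore satisfied.

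Next I would substitute $r=1$ directly into formula (\ref{eq:fef_formula}) and track which terms survive. Inside the square root, the first summand $(r\Delta\rho_A\Delta\rho_B - (1-r)\Delta\tau_A\Delta\tau_B)^2$ collapses to $(\Delta\rho_A\Delta\rho_B)^2$, while the second summand $4r(1-r)\Delta\rho_A\Delta\rho_B\Delta\tau_A\Delta\tau_B(\cdots)^2$ vanishes identically because of its $(1-r)$ prefactor. All dependence on $\tau_A,\tau_B$ and on the overlaps $\braket{\psi_A|\phi_A}$, $\braket{\psi_B|\phi_B}$ therefore drops out, which is consistent with the fact that the state itself no longer involves those operators.

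Finally I would take the square root and simplify. Since $\Delta(\cdot)$ denotes the difference between the maximum and minimum eigenvalues of a density operator, both $\Delta\rho_A$ and $\Delta\rho_B$ are nonnegative, so $\sqrt{(\Delta\rho_A\Delta\rho_B)^2} = \Delta\rho_A\Delta\rho_B$ with no absolute value needed. This yields $f(\rho) = \tfrac{1}{4} + \tfrac{1}{4}\Delta\rho_A\Delta\rho_B$, as claimed. There is essentially no genuine obstacle here: the statement is a one-line corollary of Lemma \ref{lemma:fef_formula}, and the only point demanding any attention is confirming the nonnegativity of the $\Delta$ factors so that the square root collapses cleanly to the stated expression.
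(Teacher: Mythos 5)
Your proposal is correct and matches the paper's own treatment exactly: the paper derives Lemma \ref{lemma:fef_formula_prod} precisely by setting $r=1$ in Lemma \ref{lemma:fef_formula}, with all $\tau$-dependent terms vanishing through the $(1-r)$ prefactor and the square root collapsing to $\Delta\rho_{A}\Delta\rho_{B}$ since both factors are nonnegative. No gaps.
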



\end{document}